\setlist[itemize]{label=$\circ$}
\setlist[description]{labelindent=\parindent}
\newtheorem{theorem}{Theorem}
\newtheorem{lemma}[theorem]{Lemma}
\theoremstyle{plain}
\newtheorem{definition}[theorem]{Definition}
\theoremstyle{nonumberplain}
\newtheorem{proof}{Proof}
\DeclarePairedDelimiter\set{\{}{\}}
\DeclarePairedDelimiterX\setc[2]{\{}{\}}{\,#1 \;\colon\; #2\,}
\DeclarePairedDelimiterX\parenc[2]{\lparen}{\rparen}{\,#1 \;\delimsize\vert\; #2\,}
\newcommand{\cc}[1]{\ensuremath{\mathrm{#1}}}
\newcommand{\ccNP}{\cc{NP}}
\newcommand{\ccSW}{\cc{\#W[1]}}
\newcommand{\ccSP}{\cc{\#P}}
\newcommand{\N}{\mathbb{N}}
\newcommand{\Z}{\mathbf{Z}}
\newcommand{\GF}{\ensuremath{\mathbb{F}}}
\newcommand{\mat}{\mathrm{Mat}}
\newcommand{\rankbase}{\#\mathrm{Rank}\text{-}\mathrm{Bases}}
\newcommand{\nulbase}{\#\mathrm{Nullity}\text{-}\mathrm{Bases}}
\newcommand{\rank}{\mathrm{rk}}
\newcommand{\tree}{\#\mathrm{kTrees}}
\newcommand{\wtree}{\#\mathrm{WkTrees}}
\newcommand{\awtree}{\#\mathrm{ApexWkTrees}}
\newcommand{\wt}{\mathrm{WT}}
\DeclareDocumentCommand{\restrict}{O{}}{\mathord{\restriction}_{#1}}
\title{Parameterized counting of trees, forests and matroid bases}
\author{Cornelius Brand, Marc Roth \\
Saarbr\"ucken Graduate School of Computer Science \\
Cluster of Excellence (MMCI), Saarland University}
\date{}
\begin{document}
 
\maketitle

\begin{abstract}
We investigate the complexity of counting trees, forests and bases of matroids from a parameterized point of view. It turns out that the problems of computing the number of trees and forests with $k$ edges are $\ccSW$-hard when parameterized by $k$. Together with the recent algorithm for deterministic matrix truncation by Lokshtanov et al. (ICALP 2015), the hardness result for $k$-forests implies $\ccSW$-hardness of the problem of counting bases of a matroid when parameterized by rank or nullity, even if the matroid is restricted to be representable over a field of characteristic $2$. We complement this result by pointing out that the problem becomes fixed parameter tractable for matroids represented over a fixed finite field.
\end{abstract}

\section{Introduction}
Since Valiant's seminal paper about the complexity of computing the permanent \cite{Valiant79}, counting complexity has advanced to a well studied subfield of computational complexity theory. By proving that the problem of counting the number of perfect matchings in a bipartite graph is complete for the class $\ccSP$ (the counting equivalent of $\ccNP$), he gave evidence that there are problems whose counting versions are inherently more difficult than their decision versions. For many interesting counting problems, it was shown to be $\ccSP$-hard to compute exact solutions. Therefore, several relaxations such as restrictions of input classes (see e.g. \cite{xia2007computational}) or approximate counting (see e.g. \cite{jerrum1989approximating}, \cite{dyer2010approximation}) were introduced. Another relaxation, the one this work deals with, is the analysis of the parameterized complexity of counting problems, which was introduced by Flum and Grohe in 2004 \cite{flumgrohe_counting}. They proved that, similar to classical counting complexity, there are problems whose decision versions are easy in the parameterized sense, i.e., which are fixed parameter tractable, but whose counting versions are most likely much harder. During the last years, much work has been done in the field of parameterized counting complexity. Important results are the proof of $\ccSW$-hardness for counting the number of $k$-matchings in a simple graph \cite{Curticapean2013}, and the dichotomies for counting graph homomorphisms \cite{grohe2007complexity,dalmau2004complexity} and embeddings \cite{curticapean_marx}. 

In this work, we analyze the parameterized complexity of counting trees and forests, parameterized by the number of edges, and counting bases of matroids, parameterized by rank or nullity. All of these problems are well studied in the classical sense. It is known that computing the number of all (labeled) trees is $\ccSP$-hard, even when restricted to planar graphs \cite{jerrum_trees}. The problem of counting all forests is $\ccSP$-hard even when restricted to $3$-regular bipartite planar graphs \cite{vertigan_welsh,gebauer_okamoto}. We will complement these results by showing that the problems of counting $k$-trees and $k$-forests are $\ccSW$ hard when parameterized by $k$. In both proofs, we reduce from the problem of counting $k$-matchings. Note that we cannot hope to achieve hardness results on planar graphs, as Eppstein showed that the problem of counting subgraphs of size $k$ is fixed parameter tractable on planar graphs \cite{eppstein2002subgraph}. To count the number of $k$-trees ($k$-forests) in planar graphs, we can just enumerate all possible trees (forests) consisting of $k$ edges and then apply Eppsteins algorithm for each of them, which ultimately yields an fpt algorithm.

Counting forests is very closely related to the problem of counting independent sets and bases in matroids: It is a well known fact that the $k$-forests of a graph $G$ correspond one-to-one to the $k$-independent sets of the binary matroid represented by the incidence matrix of $G$ over $\GF_2$ (see e.g. \cite{snook2012counting}). This shows already that the problem of counting $k$-independent sets in a binary matroid is $\ccSP$-hard in the classical sense. 
Moreover, it is known due to Vertigan that the problem of counting the bases of a binary matroid is $\ccSP$-hard (see e.g. \cite{vertigan1998bicycle}). On the other hand, a generalization of Kirchhoff's Theorem states that the number of bases of a regular matroid, i.e. a matroid that is representable over all fields, can be computed in polynomial time \cite{maurer_matroids}. 

We complement these results by proving that, when parameterized by rank or nullity, the problem of computing the number of bases of matroids is $\ccSW$-hard, even when restricted to matroids representable over fields of characteristic $2$. Most of the work of this proof is already contained in the hardness proof for $k$-forests. Having $\ccSW$-hardness of the problem of counting $k$-forests, we can use the recent algorithm for deterministic matrix truncation of Lokshtanov et al. \cite{Lokshtanov2015} to get the desired hardness result. We conclude with the fact that a hardness result for matroids represented over a fixed finite field, which includes the binary matroids, is very unlikely, since in this case, the problem of counting bases is fixed parameter tractable.

\section{Preliminaries}

\subsection{Parameterized counting complexity}
We begin with basic definitions of parameterized counting complexity,
following closely chapter 14 of the textbook \cite{flumgrohe}, which we recommend to the interested reader for a more comprehensive overview of the topic.
Our fundamental object of study is the following.
A \emph{parameterized counting problem} $(F,k)$ consists of a function $F:~\set{0,1}^\ast \rightarrow \N$
and a polynomial-time computable function $k:~\set{0,1}^\ast \rightarrow \N$, called the \emph{parameterization}.
A parameterized counting problem $(F,k)$ is called \emph{fixed-parameter tractable} if there is an algorithm for computing $F(x)$ running in time $f(k(x))\cdot |x|^c$ for some computable function $f:~\N \rightarrow \N$ and some constant $c > 0$ for all $x \in \set{0,1}^\ast$.
Let $(F,k)$ and $(F',k')$ be two parameterized counting problems.
Then, a function $R:~\set{0,1}^\ast \rightarrow \set{0,1}^\ast$ is called an \emph{fpt parsimonious reduction from $(F,k)$ to $(F',k')$} if
\begin{enumerate}
\item For all $x \in \set{0,1}^\ast$, $F(x) = F'(R(x))$.
\item $R$ can be computed in time $f(k(x))\cdot |x|^c$ or some computable function $f:~\N \rightarrow \N$ and some constant $c > 0$ for all $x \in \set{0,1}^\ast$.
\item There is some computable $g:~\N \rightarrow \N$ such that $k'(R(x)) \leq g(k(x))$ for all $x \in \set{0,1}^\ast$.
\end{enumerate}
An algorithm $A$ with oracle access to $F'$ is called an \emph{fpt Turing reduction from $(F,k)$ to $(F',k')$} if
\begin{enumerate}
\item $A$ computes $F$.
\item $A$ runs in time $f(k(x))\cdot |x|^c$ or some computable function $f:~\N \rightarrow \N$ and some constant $c > 0$ for all $x \in \set{0,1}^\ast$.
\item There is some computable $g:~\N \rightarrow \N$ such that for all $x \in \set{0,1}^\ast$ and for all instances $y$ for which the oracle is queried during the execution of $A(x)$, $k'(y) \leq g(k(x))$.
\end{enumerate}
The parameterized counting problem $\#k\textnormal{-}\mathrm{Clique}$ is defined as follows,
and is parameterized by $k$:
Given a graph $G$ and an integer $k$, compute the number of $k$-cliques in $G$.
The class $\ccSW$ is defined as the set of all parameterized counting problems $(F,k)$ such that there is an fpt parsimonious reduction from $(F,k)$ to $\#k\textnormal{-}\mathrm{Clique}$.
A parameterized counting problem $(F,k)$ is called $\ccSW$-\emph{hard} if there is an fpt Turing reduction from  
$\#k\textnormal{-}\mathrm{Clique}$ to $(F,k)$.

\subsection{Matroids}
A \emph{matroid} is a pair $M = (E,\mathcal{I})$ consisting of a finite ground set $E$ and $\mathcal{I} \neq \emptyset$, a family of subsets of $E$, that satisfies the following axioms:
\begin{enumerate}
\item $\mathcal{I}$ is downward closed, i.e. if $I \in \mathcal{I}$ and $I' \subset I$, then $I' \in \mathcal{I}$.
\item $\mathcal{I}$ has the exchange property, i.e. if $I_1,I_2 \in \mathcal{I}$ and $|I_1| < |I_2|$, then
there is some $e \in I_2 - I_1$ such that $I_1 \cup \set{e} \in \mathcal{I}$.
\end{enumerate}
Note that this entails $\emptyset \in \mathcal{I}$.
The elements of $\mathcal{I}$ are called \emph{independent sets}, and an inclusion-wise maximal element of $\mathcal{I}$ is called a \emph{basis} of $M$. The exchange property warrants that all bases have the same cardinality, and we call this cardinality the \emph{rank} of $M$, written as $\rank{M}$. Furthermore we define $(|E|-\rank{M})$ as the \emph{nullity} of $M$.

For a field $F$, a \emph{representation of $M$ over} $F$ is a mapping $\rho : ~ E \rightarrow V$, where $V$ is a vector space over $F$, such that for all $A \subseteq E$, $A$ is independent if and only if $\rho(A)$ is linearly independent. $M$ is called \emph{representable} if it is representable over some field.
If there is such a representation, we call $M$ \emph{representable over $F$}, or \emph{$F$-linear},
and it holds that $\rank(\rho) = \rank(M)$. In the following we will write $\GF_{p^n}$ for the field with $p^n$ elements.

Given a matroid $M=(E,\mathcal{I})$ we define the \emph{dual matroid} $M^*$ of $M$ as follows:
\begin{enumerate}
\item $M^*$ has the same ground set as $M$.
\item $B\subseteq E$ is a basis of $M^*$ if and only if $E \setminus B$ is a basis of $M$. 
\end{enumerate}
Given a representation of a matroid $M$ it is a well known fact that a representation of $M^*$ in the same field can be found in polynomial time. (see e.g. \cite{marx2009parameterized}).

In the following, \emph{all} matroids will be assumed to be representable,
and encoded using a representing matrix $\rho$ and a suitable encoding for the ground field.
Furthermore, we can, without loss of generality, always assume that $\rho$ has $\rank(M)$ many rows,
because row operations (multiplying a row by a non-zero scalar, and adding such multiples to other rows) do not affect linear independence of the columns of $\rho$. Hence, any $\rho'$ obtained from $\rho$ through row operations is a representation of $M$. In particular, by Gaussian elimination, we may assume all but the first $\rank(M)$ rows of $\rho$ to be zero.

\subsection{Graphs and matrices}
We consider simple graphs without self-loops unless stated otherwise. Given a graph $G$ we will write $n$ for the number of vertices of $G$ and $m$ for the number of edges. 
A $k$-\emph{forest} is an acyclic graph consisting of $k$ edges and a $k$-\emph{tree} is a connected $k$-forest. We say that two graphs $G_1=(V_1,E_1)$ and $G_2=(V_2,E_2)$ are \emph{ismorphic} if there is a bijection $\varphi: V_1 \rightarrow V_2$ such that for all $u,v \in V_1$
\[ \{u,v\} \in E_1 \leftrightarrow \{\varphi(u),\varphi(v)\} \in E_2 \]
A $k$-\emph{matching} of a graph $G=(V,E)$ is a subset of $k$ edges such that no pair of edges has a common vertex. The problem of counting $k$-matchings in a simple graph is known to be $\ccSW$-hard \cite{Curticapean2013}.

Given a graph $G$ with vertices $v_1,\dots,v_n$ and edges $e_1,\dots,e_m$ we define the (unoriented) \emph{incidence matrix} $M[G]\in \mat(n\times m, \GF_2)$ of $G$ as follows:
\begin{enumerate}
\item $M[G](i,j) = 1$ if $v_i \in e_j$ and
\item $M[G](i,j) = 0$ otherwise.
\end{enumerate}
It is a well known fact that a subset of columns of $M[G]$ is linearly independent (over $\GF_2$) if and only if the corresponding edges form a $k$-forest in $G$. \\
Let $F$ be a field and $M \in \mat(n\times m, F)$ be a matrix over $F$ with columns $c_1,\dots,c_m$. A $k$-\emph{truncation} of $M$ is a matrix $M^k$ of size $(k \times m)$ (not necessarily over the same field) with columns $c_1^k,\dots, c_m^k$ such that for every set $I \subseteq \{1,\dots,m\}$ of size at most $k$ it holds that 
$ \{c_i\}_{i \in I}$ is linear independent if and only if $\{c_i^k\}_{i \in I}$ is. Recently Lokshtanov et al proved that a $k$-truncation of a matrix can be done in deterministic polynomial time (see \cite{Lokshtanov2015}, Theorem 3.23).

\section{The parameterized complexity of counting trees and forests}
In this section we will prove that counting $k$-trees and $k$-forests is $\ccSW$-hard. The latter will be used to show hardness of counting matroid bases in fields of fixed characteristic.

\subsection{Counting $k$-trees}

\begin{definition}
	Given a graph $G$ and a natural number $k$, we denote the problem of counting all subgraphs of $G$ that are isomorphic to a $k$-tree as $\tree$.
\end{definition}

\begin{theorem}
	\label{thm:ktrees}
	$\tree$ is $\ccSW$-hard when parameterized by $k$.
\end{theorem}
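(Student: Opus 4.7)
The plan is to give an fpt Turing reduction from $\#k\text{-}\mathrm{Matching}$, which is $\ccSW$-hard by \cite{Curticapean2013}. Given an instance $(G,k)$, I would construct a one-parameter family of auxiliary graphs $\{G_N\}_{N\in\N}$, query the $\tree$-oracle on each $G_N$ at parameter $2k$, and extract $\#k\text{-}\mathrm{Matching}(G)$ by polynomial interpolation in~$N$.

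Concretely, $G_N$ is obtained from $G$ by adding a single new vertex $c$ adjacent to every vertex of $V(G)$, together with $N$ further pendant vertices adjacent only to $c$. The guiding combinatorial observation is the following: a $k$-matching $\{e_1,\dots,e_k\}$ of $G$ equipped with a chosen endpoint $w_i\in e_i$ for each $i$ corresponds to a ``double-star'' $(2k)$-tree in $G_N$ centred at $c$ with $k$ length-two arms, each arm being the path from $c$ through $w_i$ to the other endpoint of $e_i$. There are thus exactly $2^k\cdot\#k\text{-}\mathrm{Matching}(G)$ such double-stars in $G_N$, independently of~$N$.

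The next step is to account for \emph{all} $(2k)$-trees of $G_N$, not only the double-stars. I would classify each such tree by the triple $(a,b,p)$ giving, respectively, the number of $G$-edges, $c$-to-$V(G)$-edges, and $c$-to-pendant-edges used, and verify an identity
\[
\tree(G_N,2k) \;=\; \sum_{p=0}^{2k}\binom{N}{p}\,C_p,
\]
in which every coefficient $C_p$ is a $\Z$-linear combination of ``rooted-forest'' counts in $G$ and is independent of~$N$. Issuing the oracle on $G_0,\dots,G_{2k}$ and inverting a Vandermonde-type system recovers every $C_p$ exactly.

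The hard part will be isolating $\#k\text{-}\mathrm{Matching}(G)$ from the $C_p$: the matching count appears as one summand of $C_0$, the one indexed by forests with $k$ components each of size exactly two, but $C_0$ also absorbs contributions from forests having isolated-vertex components or components on $\geq 3$ vertices. To disentangle these, I would layer a second interpolation, this time attaching $M$ further pendants to \emph{every} vertex of $V(G)$, so that the resulting bivariate polynomial $\tree(G_{N,M},2k)\in\Z[N,M]$ separates the forest summands by their component-size profile. Combined with an induction on $k$, in which the remaining contaminating summands are re-expressed via oracle queries at strictly smaller parameters, this yields the reduction and establishes $\ccSW$-hardness.
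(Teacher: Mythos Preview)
Your high-level strategy---add an apex, look at $(2k)$-trees, identify $k$-matchings with the double-stars, and interpolate---is exactly the paper's, and your graph $G_N$ is literally the paper's $G_x$ (their $x$ extra isolated vertices joined to the apex are your $N$ apex-pendants). The paper, however, needs \emph{two} independent interpolation parameters. One is your $N=x$; varying it separates trees by the number of ``free'' apex edges (those whose non-apex endpoint has degree~$1$ in the tree). The other is a weight $z$ placed on every apex edge: interpolating in $z$ first isolates the ``fair'' trees, those using exactly $k$ apex edges. Only after \emph{both} filters does one arrive at a quantity equal to $2^k$ times the number of $k$-matchings. The weights are then eliminated by replacing the single apex with $z$ parallel apices joined to a fresh super-apex and using inclusion--exclusion over the $z$ super-apex edges to force all of them into the tree.

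Your proposed second parameter $M$ (pendants at every vertex of $V(G)$) does not carry the information that $z$ does. Every $(2k)$-tree in $G_0$---the double-stars and all the contaminating ones alike---uses zero vertex-pendant edges, so they all land in the same coefficient of the $M$-polynomial; the claim that the bivariate polynomial ``separates the forest summands by their component-size profile'' is therefore unjustified. More precisely, the number of ways to extend a $j$-edge core tree in $G_0$ by apex-pendants and vertex-pendants depends only on $j$ and on whether the apex lies in the core (a $j$-tree through $c$ has exactly $j$ vertices in $V(G)$, regardless of the shape of its $G$-edge forest). Hence the entire bivariate polynomial $\tree(G_{N,M},2k)$ is determined by the numbers $a_j=\#\{j\text{-trees in }G\}$ and $b_j=\#\{j\text{-trees in }G_0\text{ through }c\}$ for $j\le 2k$, and conversely these are all one can recover from it. Neither list is fine enough to isolate $k$-matchings, and the sketched ``induction on $k$'' does not help: the contaminating terms inside $b_{2k}$ are counts of $k$-forests with prescribed component profiles, not quantities expressible through $(k{-}1)$-tree oracle calls. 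The missing ingredient is precisely a device that tags apex edges by multiplicity, which is what the weight $z$ (and its multi-apex gadget realisation) supplies.
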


For the proof, we show hardness of an intermediate problem.

\begin{definition}[Weighted $k$-trees]
	Given a graph $G=(V,E)$ with edge weights $\{w_e\}_{e\in E}$ and $k \in \N$, $\wtree$ is defined as the problem of computing
	\[ \wt_k(G) := \sum_{t \in T_k(G)} \prod_{e \in t} w_e \]
	where $T_k(G)$ is the set of all trees in $G$ consisting of $k$ edges.
\end{definition}

\begin{definition}
	\label{def:awtrees}
	Let $k$ be a natural number, $G = (V,E)$ be a graph with an apex $a \in V$, that is, a vertex that is adjacent to every other vertex, and edge weights $\{w_e\}$ such that $w_e = 1$ for all edges $e$ that are not adjacent to $a$ and $w_e = z$ for a $z \leq k$ otherwise. Then we denote the problem of computing $\wt_k(G)$ as $\awtree$. 
\end{definition}

\begin{lemma}
	$\awtree$ is $\ccSW$-hard.
\end{lemma}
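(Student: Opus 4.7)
The plan is to give an fpt Turing reduction from the $\ccSW$-hard problem of counting $k$-matchings to $\awtree$. Given an instance $(H, k)$ with $n = |V(H)|$, we may assume $n \geq 2k$ (otherwise $H$ has no $k$-matching and the reduction returns $0$). For each $N \in \{n, n+1, \ldots, n+k-1\}$, form $H_N$ by padding $H$ with $N - n$ isolated vertices, and let $G_N$ be $H_N$ together with an apex $a$ adjacent to every other vertex. For each such $N$ and each $z \in \{0, 1, \ldots, 2k\}$, query the $\awtree$ oracle on $(G_N, 2k, z)$. This amounts to $O(k^2)$ oracle calls, each with tree-parameter $2k$, keeping the reduction fpt.

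For each fixed $N$, the value $\wt_{2k}(G_N) = \sum_{j=0}^{2k} a_j(G_N)\, z^j$ is a polynomial in $z$ of degree at most $2k$, where $a_j(G_N)$ counts the $(2k)$-trees in $G_N$ using exactly $j$ apex-incident edges. Polynomial interpolation in $z$ recovers the $a_j(G_N)$, and we single out $a_k(G_N)$. Each such $(2k)$-tree decomposes into a forest $F$ in $H_N$ on $2k$ vertices with $k$ components, together with the apex connected to exactly one vertex in each component (contributing $|V(C)|$ choices per component $C$), so $a_k(G_N) = \sum_F \prod_C |V(C)|$. Since non-apex edges lie entirely in $H$, the non-singleton components of $F$ must embed in $H$. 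If $F$ has $t$ non-singleton components, then the total non-singleton vertex count is forced to be $k + t$, and the remaining $k - t$ singletons may be placed among the $N - (k + t)$ unused vertices of $H_N$. Grouping by the multiset $\tau'$ of shapes of non-singleton components yields
\[
a_k(G_N) \;=\; \sum_{t=1}^{k} f_t(H)\, \binom{N - (k+t)}{k - t},
\]
where $f_t(H)$ is $N$-independent, and crucially $f_k(H) = 2^k \cdot M_k(H)$, since $t = k$ forces every non-singleton component to be a single edge, so $\tau'$ is a $k$-matching of $H$. Here $M_k(H)$ denotes the number of $k$-matchings in $H$.

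The right-hand side is a polynomial in $N$ expanded in the basis $\{\binom{N - (k+t)}{k-t}\}_{t=1}^{k}$, whose elements have pairwise distinct degrees $\{0, 1, \ldots, k-1\}$ and are therefore linearly independent. Evaluating at the $k$ distinct values $N \in \{n, \ldots, n + k - 1\}$ yields a nonsingular $k \times k$ linear system in $f_1(H), \ldots, f_k(H)$; solving it isolates $f_k(H)$, from which $M_k(H) = f_k(H)/2^k$ is recovered. All non-oracle computation is polynomial in $k$ and $n$, so the reduction is fpt.

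The main obstacle will be verifying the decomposition identity for $a_k(G_N)$ in the second paragraph: carefully enumerating the forest shapes contributing to $a_k$, showing that the singleton placement contributes exactly the binomial factor $\binom{N - (k+t)}{k-t}$, and confirming that the matching count $M_k(H)$ appears as the unique $N$-independent term (the one indexed by $t = k$), which is what makes the outer interpolation in $N$ isolate it cleanly.
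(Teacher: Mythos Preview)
Your proposal is correct and follows essentially the same approach as the paper: pad with a varying number of isolated vertices, add an apex, interpolate in the apex-edge weight $z$ to isolate the count of $(2k)$-trees using exactly $k$ apex edges, express this count as a linear combination (with binomial coefficients depending on the padding) of padding-independent quantities, and solve the resulting system to recover the term equal to $2^k M_k$. The only cosmetic difference is that you phrase the decomposition in terms of the forest of non-apex edges and its singleton versus non-singleton components, whereas the paper phrases it via ``catchy'' versus ``free'' apex edges; your $f_t(H)$ coincides with the paper's $\alpha_t$, and your variable $N$ plays the role of the paper's $n+x$.
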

\begin{proof}
	We reduce from the problem of counting $k$-matchings. Let $G=(V,E)$ be a graph with $n= |V|$. For $x$ and $z$ we construct the graph $G_{x,z}$ as follows:
	\begin{itemize}
		\item Add $x$ isolated vertices to $G$.
		\item Add a vertex $a$ to $G$ and connect $a$ to all other vertices, including the isolated ones (i.e., $a$ is an apex).
		\item Assign weights to the edges as follows: If $a \notin e$ then set $w_e = 1$. Otherwise set $w_e = z$.
	\end{itemize}
	The unweighted version of $G_{x,z}$ is just denoted by $G_x$. Furthermore we denote set of edges adjacent to $a$ as $E_x^a$.
	Now fix $x$ and compute for all $i\in \{0,\ldots,2k \}$ the value $P_i(x) = \wt_{2k}(G_{x,i})$ with an oracle for $\awtree$. This corresponds to evaluating the following polynomial in points $0,\ldots, 2k$:
	\begin{equation*}
	\begin{split}
	Q_x(z) &= \sum_{t \in T_{2k}(G_{x})} \prod_{e \in t} w_e\\
	& = \sum_{t\in T_{2k}(G_{x})} \prod_{\substack{e \in t\\ a\notin e}} 1 \cdot \prod_{\substack{e \in t\\ a\in e}} z\\
	& = \sum_{t\in T_{2k}(G_{x})} z^{|E_x^a \cap t |}
	\end{split}
	\end{equation*}
	Therefore we can interpolate all coefficients. In particular, we are interested in the coefficient of $z^k$ which is the number of trees of size $2k$ in $G_x$ such that exactly $k$ edges of the tree are adjacent to the apex $a$. We call these trees \emph{fair} and denote their number as $F_x$.
	Now consider an edge $e = \{v,a\}$ of such a fair tree $t$ that is adjacent to $a$. We call $e$ \emph{free} if $v$ is not adjacent to any other edge of $t$. Otherwise we call $e$ \emph{catchy} and we call the subtree $t^c$ of $t$ without free edges \emph{catchy} as well. (Note that $t^c$ is indeed a tree, not only a forest). Furthermore $t^c$ is a tree in $G_0$ as edges that are connected to isolated vertices cannot be catchy.
	
	Next let $\alpha_g$ be the number of catchy trees in $G_0$ of size $g+k$ such that $k$ edges of the tree are not adjacent to $a$. Then we claim
	\[ F_x = \sum_{g=1}^{k} \alpha_g \cdot \binom{n+x-(k+g)}{k-g} \]
	To prove this claim, consider a fair tree $t$ in $G_x$ with $g$ catchy edges. Then there are $\alpha_g$ possibilities for the catchy tree. As $t$ is fair, there are $k-g$ free edges to choose. Furthermore the catchy tree covers exactly $k+g$ vertices in $G_x$, that is, there are $n+x-(k+g)$ vertices that remain for the free edges. It follows that there are exactly $\alpha_g \cdot \binom{n+x-(k+g)}{k-g}$ possibilities. Note also that $\alpha_0$ is zero, as we are counting trees.\\
	Now we can evaluate $F_x$ for $x \in \{ 1,\ldots,k \}$ to solve for the $\alpha_g$. We have to show that the corresponding matrix has full rank:
	\[ A_{i,j} = \binom{i+n-k-j}{k-j} \]
	Then the $j$th column is an evaluation vector of the polynomial
	\[R_j(x) = \binom{x+n-k-j}{k-j} \]
	Furthermore the degrees of the polynomials are pairwise distinct and hence the polynomials and the evaluation vectors are also. It follows that $A$ has full rank, i.e., we can compute the coefficients.
	Finally consider $\alpha_k$: This is the number of catchy trees in $G_0$ with $k$ edges in $G_0$ and $k$ catchy edges. It follows that these trees correspond to $k$-matchings in $G$, more precisely, for every $k$-matching in $G$ there are $2^k$ such catchy trees. This concludes the proof.
\end{proof}

For the proof of Theorem~\ref{thm:ktrees} we need one further lemma.

\begin{lemma}
	$\awtree$ is fpt-turing reducible to $\tree$.
\end{lemma}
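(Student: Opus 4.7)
The plan is to reduce $\awtree$ to $\tree$ via polynomial interpolation in the apex-edge weight. Write $\wt_k(G) = \sum_{j=0}^{k} z^j N_j$, where $N_j$ denotes the number of unweighted $k$-trees in the underlying graph $G_0$ with exactly $j$ edges at the apex $a$; this is a polynomial of degree at most $k$ in the apex weight. The input specifies a particular integer $z \leq k$, so it suffices to recover all coefficients $N_0, \ldots, N_k$ by evaluating this polynomial at $k+1$ distinct integer points $y \in \{0, 1, \ldots, k\}$ and then computing $\sum_{j=0}^k z^j N_j$ at the input $z$.

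For a fixed evaluation point $y$, the aim is to build an unweighted simple graph $H_y$ (in FPT time from $|G|$ and $k$) together with a formula expressing $\sum_j y^j N_j$ in terms of $\tree(H_y,\cdot)$. A natural construction is to replace each apex edge $\{v,a\}$ in $G_0$ with $y$ parallel length-$2$ paths through fresh middle vertices $m_1^v, \ldots, m_y^v$. A $k$-tree $t$ in $G_0$ with $j$ apex edges, together with an assignment of one of the $y^j$ middle-vertex choices for its apex edges, then corresponds bijectively to a $(k+j)$-tree in $H_y$ in which every middle vertex that is present has degree exactly $2$ (a \emph{pass-through}). Summed over $j$, the count of such pass-through-only trees in $H_y$ across sizes $k, k+1, \ldots, 2k$ is precisely $\sum_j y^j N_j$, i.e.\ the weighted tree count with apex weight $y$.

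The main technical obstacle is that $\tree(H_y, s)$ returns \emph{all} $s$-trees in $H_y$, including those where some middle vertex appears as a leaf instead of a pass-through. These extraneous contributions are removed by inclusion-exclusion over the subset of middle vertices appearing as leaves: because middles sharing one apex-edge gadget are symmetric and $y \leq k$, this IE collapses to a number of combinatorially distinct terms bounded by a function of $k$. Each such term is realized as a $\tree$-query on a modification of $H_y$ obtained by deleting the specified middle vertices and pinning their leaf attachments to $v$ or $a$; since the relevant tree sizes lie in $[k,2k]$ and the number of such queries depends only on $k$, the whole evaluation of $\sum_j y^j N_j$ stays within the FPT regime.

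Once the evaluations $\sum_j y^j N_j$ are available for $y = 0, 1, \ldots, k$, the Vandermonde-like matrix $A_{ij} = i^j$ is invertible in exactly the same way as the matrix appearing in the proof of the previous lemma, so the coefficients $N_0, \ldots, N_k$ are recovered by solving a linear system; plugging in the input $z$ then yields $\wt_k(G)$. The hardest part of the argument is organising the inclusion-exclusion cleanly and ensuring that each of the resulting oracle queries really does reduce to a $\tree$-instance whose parameter is bounded by some function of $k$; the bound $z \leq k$ that is built into $\awtree$ is what keeps all the parameters — size of $H_y$, tree sizes $s \in [k,2k]$, and the number of IE terms — under FPT control.
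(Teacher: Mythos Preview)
Your reduction strategy diverges from the paper's and, as written, contains two genuine gaps.

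\textbf{The claimed bijection fails.} You assert that pass-through-only $(k+j)$-trees in $H_y$ correspond bijectively to lifts of $k$-trees in $G_0$ with $j$ apex edges. But a pass-through-only tree of size $s$ with $p$ pass-through middles contracts to an $(s-p)$-tree in $G_0$ with $p$ apex edges, and nothing forces $p=j$. For instance, a $(k+1)$-tree in $H_y$ that avoids all middle vertices entirely (so $p=0$) is simply a $(k+1)$-tree in $G\setminus\{a\}$; it is pass-through-only, lies in your size range $[k,2k]$, yet is not the lift of any $k$-tree. Hence $\sum_{s=k}^{2k} P_s$ does \emph{not} equal $\sum_j y^j N_j$; it mixes in $N_{m,p}$ for all $m$ up to $2k$. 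This can be repaired by further interpolation (recover each $P_s$ separately, then for each $s$ interpolate in $y$ to extract $N_{s-p,p}$), but you have not done so.

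\textbf{The inclusion--exclusion does not collapse.} You need to isolate pass-through-only trees among all $s$-trees by IE over the event ``middle vertex $m$ is a leaf''. There are $y\cdot |V|$ middle vertices. The symmetry you invoke is only among $m_1^v,\ldots,m_y^v$ for a \emph{fixed} $v$; across different $v$ there is no symmetry, and there are $n$ gadgets. Thus the number of distinct IE terms is governed by choosing which gadgets contain forced leaves and how many in each---this is $n^{\Theta(k)}$, not $f(k)$. Moreover, each term asks for the number of $s$-trees in which a prescribed set of middle vertices are leaves; this is not a $\tree$ instance, and your remark about ``deleting the specified middle vertices and pinning their leaf attachments'' does not explain how to turn it into one (pinning a leaf to $v$ requires $v$ to be in the tree, which $\tree$ cannot enforce).

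For comparison, the paper's construction sidesteps both issues. It does \emph{not} interpolate: to realise apex weight $z$ directly, it replaces the single apex by $z$ apices $a_1,\ldots,a_z$ (each adjacent to all of $V$) together with a super-vertex $a$ joined to every $a_i$. A $(k+z)$-tree containing all $z$ edges $\{a,a_i\}$ then contracts (along that fixed star) to a $k$-tree in $G_0$, and each $k$-tree with $j$ apex edges has exactly $z^j$ such lifts, since a 4-cycle argument forbids two edges $\{v,a_i\},\{v,a_j\}$. Because the forced edges form a \emph{fixed} set of size $z\le k$, the IE of Lemma~\ref{lem:incl_excl_tree} over subsets of those $z$ edges needs only $2^z$ oracle calls, each to $\tree$ on a subgraph with parameter $k+z\le 2k$.
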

\begin{proof}
	Let $k$ be a natural number and $G=(V \cup \{a\},E \cup V \times \{a\} )$ be a graph with apex $a$ and edge weights as in Definition~\ref{def:awtrees}. The goal is to compute $\wt_k(G)$. Therefore we have to realize the edge weights that are not equal to $1$. Fortunately all of these edges are adjacent to the $a$. Let $z$ be the weight of these edges. If $z$ is zero, we just delete the apex and compute the number of $k$ trees in $G - \{a\}$. Otherwise we construct the graph $G^z=(V^z,E^z)$ from $G$ as follows:
	\begin{itemize}
		\item Delete the apex and the adjacent edges.
		\item Add apices $a_1,\ldots,a_z$ (including edges to all vertices in $G$).
		\item Add a vertex $a$ and edges $\{a,a_i\}$ for all $i \in [z]$.
	\end{itemize}

	Now we want to count the $k+z$ trees in $G^z$. As $z$ is promised to be bounded by $k$ we can do that in the fpt reduction. We will use the inclusion-exclusion principle to compute the number of $k+z$ trees in $G^z$ such that all edges $\{a,a_i\}$ for all $i \in [z]$ are met. We call these trees \emph{convenient}. Note that for any $v \in V$ a convenient tree can never contain two edges $\{v,a_i\}$ and $\{v,a_j\}$ for $i \neq j$ since this would induce the cycle $(v,a_i,a,a_j,v)$. 
	It follows that taking an edge $\{v,a\}$ with weight $z$ in a tree in $G$ is realised by taking one of the $z$ edges $\{v,a_i\}$ for $i \in [z]$ in a convenient tree in $G^z$.\\
	It remains to show how to use the inclusion-exclusion principle to enforce that the trees we count only consist of a chosen subset of edges. 
	\begin{lemma}
	\label{lem:incl_excl_tree}
	Let $G=(V,E)$ be a simple graph and $A \subseteq E$ a subset of edges of size $z$. Then the problem of counting the number of $(k+z)$ trees whose edges contain $A$ can be solved in time $O(2^z) \cdot \mathrm{poly}(|V|)$ if access to an oracle for $\tree$ is provided.
	\end{lemma}
\begin{proof}
Let $S$ be a subset of edges of $G$. We define $T_S$ as the set of all $(k+z)$ trees in $G$ that do not contain any edge in $S$. Note that we can compute $|T_S|$ by deleting all edges in $S$ and count the number of $(k+z)$ trees in $G$ by posing an oracle query. Furthermore it holds that \begin{align}
\label{eqn_incl_excl}
T_{S_1} \cap T_{S_2} = T_{S_1 \cup S_2}
\end{align} 
for any two subsets of edges $S_1$ and $S_2$. Let $T$ be the set of all $(k+z)$ trees in $G$, i.e., $T := T_\emptyset$. Now we can express the number of $(k+z)$ trees whose edges contain $A$ as 
\[|T \setminus \bigcup_{a \in A} T_{\{a\}}|\]
Using the inclusion-exclusion principle we get
\[ |T \setminus \bigcup_{a \in A} T_{\{a\}}| = |T| - \left( \sum_{\emptyset \neq J \subseteq A} (-1)^{|J|-1} ~|\bigcap_{a \in J}T_{\{a\}}|  \right)  = |T| - \left( \sum_{\emptyset \neq J \subseteq A} (-1)^{|J|-1} ~|T_J|  \right)\]
where the second equality follows from \eqref{eqn_incl_excl}. Finally we observe that there are exactly $2^{z}$ summands each of which can be computed by posing an oracle query for the corresponding $T_J$.
\end{proof}
Now Lemma~\ref{lem:incl_excl_tree} allows us to compute the convenient trees in $G^z$ in time $O(2^z) \cdot \mathrm{poly}(|V(G^z)|)$, given acces to an oracle for $\tree$. As $z$ is bounded by $k$, the construction of the Turing reduction is complete.
	
\end{proof}

\subsection{Counting k-forests}
\begin{definition}
	Let $G = (V,E)$ be a multigraph with edges labeled with formal variables $(w_e)_{e \in E}$.
	Then the \emph{multivariate forest polynomial} of $G$ is defined as
	\[
	F(G;(w_e)_{e \in E}) = \sum_{A \subseteq E \text{ acyclic }} \prod_{e \in A} w_e \,.
	\]
	The specialization of $w_e = x$ for all $e \in E$ and a fresh variable $x$ is called the
	\emph{univariate forest polynomial} of the graph and is simply denoted $F(G;x)$.
\end{definition}

The following was established by Curticapean \cite{Curticapean2013}:
\begin{theorem} \label{thm:kmatchings}
	Given a graph $G$ and a parameter $k$, it is $\ccSW$-hard to compute the number of 
	matchings of size $k$ in $G$.
\end{theorem}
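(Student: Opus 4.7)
The plan is to reduce $\#k\textnormal{-}\mathrm{Clique}$ to counting $k'$-matchings for some $k' = k'(k)$, passing through a colored intermediate problem and finishing with polynomial interpolation. First, I would replace $\#k\textnormal{-}\mathrm{Clique}$ by its partitioned variant $\#\mathrm{Multicolored}\text{-}k\text{-}\mathrm{Clique}$: given a graph $G$ with vertex partition $V_1 \dotcup \cdots \dotcup V_k$, count the $k$-tuples $(v_1,\dots,v_k) \in V_1 \times \cdots \times V_k$ that induce a clique in $G$. A standard color-coding / color-averaging argument gives an fpt Turing reduction between the two, so it suffices to handle the partitioned version.

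Given such a colored instance, I would construct a host graph $H = H(G)$ equipped with a few integer parameters $x_1,\dots,x_r$, one per gadget type, designed so that the number of matchings of size $k'$ in $H$, viewed as a polynomial $P(x_1,\dots,x_r)$ in these parameters, has the count of multicolored $k$-cliques of $G$ as a specific, identifiable coefficient. The gadgetry would consist of \emph{selection gadgets}, one per color class $V_i$, whose size-$\ell_i$ partial matchings biject with the vertices of $V_i$; and \emph{edge gadgets}, one per pair $\{i,j\}$ and per candidate $(u,v)\in V_i\times V_j$, that can be saturated to the chosen size only if the $i$-th and $j$-th selection gadgets have picked vertices $u$ and $v$ with $\{u,v\}\in E(G)$. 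Integer weights would be realized by multiplying edges (working in a multigraph relaxation) or by attaching short paths of controlled length, and the polynomial $P$ would be reconstructed by evaluating $\#k'\text{-}\mathrm{Matching}$ on the resulting instances for sufficiently many tuples of parameter values, inverting a Vandermonde-type system in fpt time.

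The main obstacle is the gadget design. Unlike decision reductions, a counting reduction requires that every size-$k'$ matching of $H$ that does \emph{not} correspond to a multicolored $k$-clique of $G$ contributes to lower-order monomials of $P$ in a combinatorially transparent way, so that their contribution can be isolated and subtracted during interpolation; meanwhile the top coefficient must equal the multicolored clique count exactly (or up to a known, nonzero, $k$-dependent factor, which is absorbed by the fpt Turing framework). Once this delicate matching-polynomial identity is in place, the remaining steps — color coding on the input side, realizing integer weights on the output side, and inverting the Vandermonde system — are standard fpt-counting manipulations.
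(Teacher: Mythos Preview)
The paper does not prove this theorem at all: it is quoted verbatim as a result of Curticapean \cite{Curticapean2013} and used as a black box. So there is no ``paper's own proof'' to compare against; the relevant comparison is with Curticapean's argument.

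Your proposal is a reasonable outline of the generic \ccSW-hardness template (pass to a multicolored version, build gadgets, interpolate), but it is not a proof. The entire content of the theorem lies in the step you label ``the main obstacle is the gadget design'' and then leave unspecified. For matchings this step is genuinely hard, not a routine exercise: a matching is just a set of independent edges, so there is no obvious local gadget whose size-$\ell_i$ partial matchings biject with vertices of $V_i$ \emph{and} whose interaction with neighboring gadgets enforces the adjacency constraint without overcounting. This is precisely why the problem remained open for roughly a decade after \ccSW\ was introduced, while analogous results for $k$-paths, $k$-cycles, etc.\ were known early on via exactly the kind of selection/edge gadgets you describe.

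Curticapean's actual proof does not proceed by building vertex-selection and edge-verification gadgets in the way you sketch. He passes through an \emph{edge}-colored intermediate problem (colorful $k$-matchings, where the matching must hit each of $k$ edge-color classes exactly once), and the reduction from cliques to colorful matchings relies on an algebraic argument tied to the permanent rather than on combinatorial consistency gadgets; the final step from colorful to uncolored matchings is an inclusion--exclusion, which is the only part that matches your outline. So as written, your proposal identifies the right reduction skeleton but does not supply the one idea that makes the theorem non-trivial; you should either cite \cite{Curticapean2013} as the paper does, or reproduce the colorful-matching/permanent argument explicitly.
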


Adding an apex, that is, a new vertex that is connected to all other vertices,
to a graph $G = (V,E)$ and labeling each of the new edges with a new variable $z$
makes the univariate forest polynomial into a bivariate one, namely the following:
\[
F(G';x,z) = \sum_{A \subseteq E \text{ forest }} x^{|A|} \cdot F(S_{A};z)
\]
where $G'$ is the described graph with an added apex and 
$S_{A}$ is the star-graph with $|c(A)|+1$ vertices and multiedges 
with multiplicity $|\kappa|$ to the vertex corresponding to the component $\kappa \in c(A)$.
In the following, $G$ will always be the original graph, and $G'$ will be the graph obtained in this way.

The corresponding forest polynomial $F(S_A;z)$ for a forest of size $|A| = n/2$ has the nice property that it vanishes at $z = -1$ whenever $A$ contains at least one component consisting only of a single vertex, since then the number of even and odd subsets of edges of $A$ is equal (evaluating a forest polynomial at $-1$ yields the difference between the number of even and odd forests in the graph).
On the other hand, if all components consist of exactly two points (i.e., every component is a single edge), then the polynomial is easily seen to evaluate to $(-1)^{n/2}$.
This shows that the coefficient of $x^{n/2}$ in $F(G';x,-1)$ is, up to sign, the number of perfect matchings in $G$. This proves that computing this coefficient is at least as hard as
counting perfect matchings of $G$.

It is now natural to ask whether the $k$-th coefficient in $F(G';x,z)$ can somehow be used to gain information on the number of $k$-matchings.
Clearly, the above property of the evaluation with $z = -1$ is not particularly useful in this regime: Every $k$-matching will leave at least one unmatched vertex in the graph, and therefore, the coefficient of $x^k$ is zero. More precisely, a forest with $k$ edges that results in exactly $n-2k$ single-vertex components is a $k$-matching, and every $k$-matching has this property.
We can leverage this property to generalize the case of $k = n/2$ to arbitrary $k$.

\begin{lemma} \label{lem:kcoeffhard}
	There is a polynomial-time Turing reduction from counting $k$-matchings in a graph $G$
	to computing the coefficient (which is an element of $\Z[z]$) of $x^k$
	of the bivariate forest polynomial $F(G';x,z)$ of the graph $G'$.
	In particular, this reduction retains the parameter $k$ and is thus a valid fpt-reduction.
\end{lemma}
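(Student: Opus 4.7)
The plan is to extract the number $m_k(G)$ of $k$-matchings of $G$ from a \emph{single} oracle query by means of a change of variable. Concretely, the oracle returns the polynomial $P(z) := [x^k] F(G';x,z) \in \Z[z]$, and I will show that, after the substitution $z = y - 1$, the coefficient of $y^{n-2k}$ equals $(-1)^k\, m_k(G)$.

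I would start from the product form already derived in the passage above,
\[
P(z) = \sum_{B} \prod_{\kappa \in c(B)} \bigl(1 + |\kappa|\, z\bigr),
\]
where $B$ ranges over the $k$-forests of $G$, and separate the singleton components: writing $n_1(B)$ for the number of singleton components of $B$, the summand for $B$ becomes $(1+z)^{n_1(B)} \prod_{|\kappa|\geq 2} (1 + |\kappa|\, z)$. The combinatorial heart of the argument, which I would prove next, is the inequality $n_1(B) \geq n - 2k$, with equality if and only if $B$ is a $k$-matching. This will follow by a one-line double count: the non-trivial components of $B$ account for $n - n_1(B)$ vertices distributed across $n - k - n_1(B)$ components of size at least $2$, giving $n - n_1(B) \geq 2(n - k - n_1(B))$, with equality exactly when every non-trivial component is a single edge.

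Finally, I would substitute $z = y - 1$ into $P(z)$. The summand for $B$ then becomes $y^{n_1(B)} \prod_{|\kappa|\geq 2}\bigl(|\kappa|\, y - (|\kappa|-1)\bigr)$, and its lowest-degree term in $y$ has degree exactly $n_1(B)$. By the structural inequality, only $k$-matchings (for which $n_1(B) = n - 2k$) contribute to the coefficient of $y^{n-2k}$; for such a matching the product over non-trivial components is $(2y-1)^k$, whose constant term is $(-1)^k$. Hence $[y^{n-2k}]\, P(y-1) = (-1)^k\, m_k(G)$, and a single oracle query followed by polynomial-time substitution and coefficient extraction yields $m_k(G)$. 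Since the parameter $k$ of the query coincides with the input parameter, this is a valid (and in fact single-call) fpt Turing reduction. The step I expect to be the actual obstacle is the structural inequality pinning down $k$-matchings as the unique $k$-forests attaining the minimum singleton count $n - 2k$; once that is in hand, everything else is a routine manipulation of a single univariate polynomial.
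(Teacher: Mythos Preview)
Your proposal is correct and follows essentially the same route as the paper: both substitute $z \mapsto y-1$ and read off the coefficient of $y^{n-2k}$ (equivalently, evaluate $C_k(y)/y^{n-2k}$ at $y=0$) to recover $(-1)^k m_k(G)$. Your write-up is slightly more explicit about the inequality $n_1(B)\ge n-2k$ with equality exactly for matchings, which is precisely the divisibility statement the paper uses; the argument is otherwise identical.
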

\begin{proof}
	The $k$-th coefficient in $F(G';x,z)$ is given through
	\[
	C_k(z) = \sum_{A \in {E \choose k} \text{ forest}} \prod_{T \in c(A)} (1+|T|z).
	\]
	and in particular, if $M_k$ is the number of $k$-matchings in $G$, then
	after substituting $z \mapsto y-1$, we have
	\[
	C_k(y)/y^{n-2k} = M_k \cdot (2y-1)^{k} + R(y)
	\]
	for some polynomial $R(y)$ with the property that 
	\[
	R(y) = \sum_{i=1}^{k} q_i(y) y^{i}
	\]
	for some $q_i(y)$ that are either not divisible by $y$ or equal to zero.
	We see that $R(0) = 0$ so that $(C_k(y)/y^{n-2k})(0) = M_k \cdot (-1)^{k}$.
	All this can be easily done in polynomial time.
	
	Note how for $k=n/2$, this coincides with $C_k(y=0) = C_k(z=-1) = (-1)^{n/2} \cdot M_{n/2}$ because $y^{n-2k} = 1$.
\end{proof}

This proves that the coefficient of $x^k$ in the bivariate polynomial $F(G';x,z)$ is hard to compute.
We now want to show that this implies that the $k$-th coefficient of the univariate polynomial is
hard, and we want to do this by somehow computing the coefficient of $x^k$ in $F(G';x,z)$ from the $k$-th coefficient in suitable univariate forest polynomial.

We first show that although the degree of this coefficient polynomial (in the bivariate case) 
is not bounded by $f(k)$, but $\Omega(n)$, it suffices to know $O(k)$ coefficients (of the coefficient polynomial) in order to reconstruct the whole coefficient polynomial.

\begin{lemma} \label{lem:fewcoeff}
	We can reduce the computation of the coefficient of $x^k$ in $F(G';x,z)$ 
	to the computation of the first $k$ coefficients of fpt-many univariate forest polynomials
	on multigraphs.
\end{lemma}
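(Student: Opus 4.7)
The plan is to exploit the fact that, although $C_k(z) := [x^k]F(G';x,z)$ can have $z$-degree as large as $n-k$, it is divisible by a large power of $(1+z)$, so its first $k+1$ Taylor coefficients already determine it. I would first establish the factorization $C_k(z) = (1+z)^{n-2k}\,\tilde C_k(z)$ with $\deg \tilde C_k \leq k$. Fix a $k$-forest $A \subseteq E(G)$, let $c(A)$ be its components with isolated vertices of $G$ counted as singletons, and let $\ell(A)$ be the number of components of $A$ that contain at least one edge. Each such component has at least two vertices and the edges of $A$ are partitioned among them, so $\ell(A) \leq k$ and the number of singletons equals $(n-k)-\ell(A) \geq n-2k$. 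Since every singleton contributes a factor $(1+z)$ to $\prod_{T \in c(A)}(1+|T|z)$, the factor $(1+z)^{n-2k}$ can be pulled out of each summand of $C_k(z)$, leaving a residual factor $(1+z)^{k-\ell(A)} \prod_{T \text{ non-singleton}}(1+|T|z)$ of degree exactly $k$. The degenerate case $n<2k$ is handled directly since $C_k$ then has degree $<k$.

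Granted the factorization, writing $\tilde C_k(z) = \sum_{j=0}^k b_j z^j$ and expanding $(1+z)^{n-2k}$ gives a lower-triangular linear system with unit diagonal relating $b_0,\ldots,b_k$ to $[z^0]C_k(z),\ldots,[z^k]C_k(z)$, so these first $k+1$ Taylor coefficients determine $C_k(z)$ completely.

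It remains to extract each $[z^j]C_k(z)$ from univariate forest polynomials of multigraphs. Write $E_a$ for the apex edges of $G'$, and for a non-negative integer $r$ let $G^{\dagger}_r$ denote the multigraph obtained from $G'$ by replacing every apex edge with $r$ parallel copies. Since any two parallel edges form a multi-cycle, every forest in $G^{\dagger}_r$ uses at most one copy per bundle, yielding $F(G^{\dagger}_r;x) = \sum_{B \text{ forest in }G'} r^{|B \cap E_a|}\,x^{|B|}$. Hence $[r^j][x^{k+j}]F(G^{\dagger}_r;x)$ counts forests $B$ of $G'$ with $|B \cap E_a|=j$ and $|B \setminus E_a|=k$, which is precisely $[z^j]C_k(z)$. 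I would therefore query the oracle for $F(G^{\dagger}_r;x)$ at $r=0,1,\ldots,2k$, record the coefficients $[x^k],\ldots,[x^{2k}]$ from each, and for every $j \leq k$ interpolate $r \mapsto [x^{k+j}]F(G^{\dagger}_r;x)$ (a polynomial in $r$ of degree $\leq k+j \leq 2k$) to read off its coefficient of $r^j$. This amounts to $O(k)$ multigraph forest polynomials, of each of which only the first $2k+1$ coefficients are needed.

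The main obstacle is the factorization step, which rests on the combinatorial observation that every $k$-forest on $n$ vertices leaves at least $n-2k$ isolated vertices; once this is noted, the rest reduces to standard polynomial interpolation.
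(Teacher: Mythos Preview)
Your proof is correct and follows essentially the same two-step strategy as the paper: first use the divisibility of $C_k(z)$ by $(1+z)^{n-2k}$ to reduce to the first $k+1$ Taylor coefficients, then recover those coefficients by thickening edges and interpolating the resulting univariate forest polynomials. The only (minor, and arguably cleaner) difference is that you thicken just the apex edges and read off $c_{kj}$ directly from $[x^{k+j}]F(G^{\dagger}_r;x)$, whereas the paper thickens both edge classes via $F(G';ax,bx)=F(G'(a,b);x)$ and solves for all $c_{ij}$ with $i+j\le 2k$.
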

\begin{proof}
	By definition,
	\[
	F(S_A;z) = \prod_{T \in c(A)} (1 + |T|z) \,,
	\]
	and since we are considering the coefficient of $x^k$, we know that $A$ will have
	between $n-k$ and $n-2k$ single-vertex components.
	In other words, $(1+z)^{n-2k}$ divides $F(S_A;z)$.
	Since $c(A)$ has at most $n-k$ components, $F(S_A;z)/(1+z)^{n-2k}$ is a polynomial of degree at most $k$ in $z$.
	It is therefore sufficient to know the coefficients of $1,\ldots,z^k$ in $F(S_A;z)$ to be able to reconstruct $F(S_A;z)$, namely as follows.
	Trivially, for some $f_i$, we have
	\[
	F(S_A;z) = \sum_{i=0}^{k} f_i z^i + \sum_{j=k+1}^{n-k} f_i z^i \,,
	\]
	and since $\{(1+z)^i\}_{i}$ is a basis, there are $f'_i$ with
	\[
	\sum_{i=0}^{k} f_i z^i = \sum_{i=0}^{n-k} f'_i (1+z)^i - \sum_{j=k+1}^{n-k} f_j z^j \,.
	\] 
	Since $(1+z)^{n-2k}$ divides $F(S_A;z)$, we know that $f'_i = 0$ for $i=0,\ldots,n-2k-1$,
	that is
	\[
	\sum_{i=0}^k f_i z^i = \sum_{i=n-2k}^{n-k} f'_i (1+z)^i - \sum_{j=k+1}^{n-k} f_j z^j \,.
	\]
	In particular, this equality must hold modulo $\langle z^{k+1} \rangle$, i.e.
	\[
	\sum_{i=0}^k f_i z^i = \sum_{i=n-2k}^{n-k} f'_i (1+z)^i + r\cdot z^{k+1}
	\]
	for some $r \in \mathbb{Z}[z]$.
	We can find $f'_i$ and $r$ satisfying this equality by solving a system of linear equations,
	and from this $r$ we can find $f_j$ for $j=k+1,\ldots,n-k$.
	It follows that we can reconstruct $F(S_A;z)$.
	
	This argument readily extends to a sum of forest polynomials: 
	If we know the sum of the first $k$ coefficients of $\{F(S_A;z)\}_{A \in \mathcal{A}}$ for some collection of forests $\mathcal{A}$ of the same size,
	then we can reconstruct the sum of all coefficients of this set of polynomials.
	This stems from the fact that all these polynomials have the required divisibility property,
	and hence their sum has it as well. 
	
	We now turn to the issue of reconstructing the coefficients of $x^i z^j$ for $i+j \leq 2k$ given access to the first $2k$ coefficients of the univariate forest polynomial.
	Letting $G'(a,b)$ be the graph obtained from $G'$ by $a$-thickening those edges labeled $x$, and $b$-thickening those edges labeled $z$, and labeling all these edges with $x$, we have
	\[
	F(G';ax,bx) = F(G'(a,b);x) \,.
	\]
	We write $c_{ij}$ for the coefficients of the monomial $x^iz^j$ in $F(G';x,z)$,
	and $d(a,b)_i$ for the coefficient of $x^i$ in $F(G'(a,b);x)$.
	For $t \leq 2k$, this equality implies on the monomial level that
	\[
	x^t \cdot \left(\sum_{i+j = t} a^i b^j c_{ij} \right) = 
	x^t \cdot d(a,b)_t \,.
	\]
	In particular, if $d(a,b)_t$ is known for $t \leq 2k$, 
	and we are looking to reconstruct $c_{ij}$ with $i+j\leq 2k$,
	then this identity provides us with $2k+1$ linear equations in the $2k \choose 2$ variables $c_{ij}$.
	Using $O(k)$ suitably chosen values for the pair $(a,b)$, we can solve the system for the $c_{ij}$
	and obtain the coefficients of $x^i z^j$ for $i+j \leq 2k$.
	In particular this implies that we can compute the coefficients of
	$x^k,x^kz,\ldots,x^kz^k$, which is nothing else than the first $k+1$ coefficients
	of the sum of the forest polynomials $F(S_A;z)$ for $|A| = k$.
	As argued above, this enables us to compute the remaining coefficients of this sum,
	i.e. the coefficient (an element of $\Z[z]$) of $x^k$ in $F(G';x,z)$.
\end{proof}

\begin{theorem}
\label{thm:hardness_k_forests}
	Counting $k$-forests in simple graphs is $\ccSW$-hard.
\end{theorem}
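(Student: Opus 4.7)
The plan is to chain the preceding reductions. By Theorem~\ref{thm:kmatchings} together with Lemma~\ref{lem:kcoeffhard}, counting $k$-matchings reduces to computing the coefficient of $x^k$ in $F(G';x,z)$, and Lemma~\ref{lem:fewcoeff} reduces this further to computing the first $k$ coefficients of the univariate forest polynomial $F(G'(a,b);x)$ of fpt-many multigraphs $G'(a,b)$ whose edge multiplicities $a,b$ are polynomial in $k$. It therefore suffices to exhibit an fpt Turing reduction from this last task to counting $N$-forests in simple graphs for $N=O(k)$.

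To do so I would introduce two free positive-integer parameters $\tau_x,\tau_z$ and build the simple graph $H_{\tau_x,\tau_z}$ obtained from the underlying simple graph $G'$ by replacing every $x$-labeled edge $\{u,v\}$ by $\tau_x$ parallel length-$2$ paths through fresh internal vertices $w_1,\ldots,w_{\tau_x}$, and analogously every $z$-labeled edge by $\tau_z$ such paths. A short case analysis at each gadget---in any forest of $H_{\tau_x,\tau_z}$, at most one of the $\tau_x$ (resp.~$\tau_z$) parallel paths may be used \emph{in full}, since two full paths between the same pair of endpoints would close a cycle, while each remaining internal vertex $w_r$ may independently be isolated or be a pendant at either endpoint---yields the identity
\[
F(H_{\tau_x,\tau_z};x) \;=\; (1+2x)^{\tau_x|E_x|+\tau_z|E_z|}\; F\!\left(G';\tfrac{\tau_x x^2}{1+2x},\tfrac{\tau_z x^2}{1+2x}\right),
\]
where $F(G';u,v)=\sum_{i,j}c_{ij}u^iv^j$ is the bivariate forest polynomial of $G'$, with $c_{ij}$ counting forests of $G'$ containing exactly $i$ edges of $E_x$ and $j$ edges of $E_z$.

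Expanding, for every $N\le 2k$,
\[
[x^N]F(H_{\tau_x,\tau_z};x) \;=\; \sum_{i+j\le N/2} c_{ij}\,\tau_x^i\tau_z^j\binom{\tau_x|E_x|+\tau_z|E_z|-(i+j)}{N-2(i+j)}\,2^{N-2(i+j)}
\]
is a polynomial in $(\tau_x,\tau_z)$ linear in the unknowns $\{c_{ij}:i+j\le N/2\}$. I would query the $k$-forest oracle on $H_{\tau_x,\tau_z}$, with parameter $N\in\{0,1,\ldots,2k\}$, for all $(\tau_x,\tau_z)$ in a small grid of distinct positive integers, then extract each $c_{ij}$ with $i+j\le k$ by bivariate polynomial interpolation in $(\tau_x,\tau_z)$ together with a triangular induction on $\ell=i+j$: when $N=2\ell$, the unknowns with $i+j=\ell$ appear in the degree-$\ell$ homogeneous part of the polynomial as the pure monomials $c_{ij}\tau_x^i\tau_z^j$ (the accompanying binomial collapsing to $1$), so after subtracting the contributions of the previously-computed $c_{i'j'}$ with $i'+j'<\ell$ one reads them off. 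Since $F(G'(a,b);x)=F(G';ax,bx)$, the first $k$ coefficients of $F(G'(a,b);x)$ are then the finite sums $\sum_{i+j=j'}c_{ij}a^ib^j$ for $j'=0,\ldots,k$.

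The main obstacle is this disentangling step: the expansion does not immediately isolate any single $c_{ij}$, because each $(i,j)$ contributes across a full range of $(\tau_x,\tau_z)$-degrees, so one has to set up and invert a nontrivial linear system. The triangular induction above resolves it. The overall reduction uses $\poly(k)$ oracle calls to simple graphs of size polynomial in $|G'|+k$ with the oracle parameter bounded by $2k$, as required for a valid fpt Turing reduction.
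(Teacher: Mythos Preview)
Your chain of reductions is exactly the paper's: combine Theorem~\ref{thm:kmatchings}, Lemma~\ref{lem:kcoeffhard} and Lemma~\ref{lem:fewcoeff} to reduce to low-order coefficients of univariate forest polynomials, then pass from multigraphs to simple graphs. The only difference is the last step: the paper simply cites Lemma~9 of \cite{ETHlpu} for that passage, whereas you supply an explicit $2$-subdivision gadget and interpolation argument.

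Your explicit step is correct. The identity
\[
F(H_{\tau_x,\tau_z};x)=(1+2x)^{\tau_x|E_x|+\tau_z|E_z|}\,F\!\left(G';\tfrac{\tau_x x^2}{1+2x},\tfrac{\tau_z x^2}{1+2x}\right)
\]
follows because any cycle in $H_{\tau_x,\tau_z}$ must traverse each visited subdivision vertex through both of its edges, so cycles correspond to cycles of fully-used paths; hence a forest of $H_{\tau_x,\tau_z}$ is specified by a forest $A$ of $G'$ (the edges whose gadget contributes a full path), a choice of which of the $\tau_e$ paths is full for each $e\in A$, and an independent $3$-way choice at every other subdivision vertex. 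Your expansion for $[x^N]F(H_{\tau_x,\tau_z};x)$ is then a polynomial in $(\tau_x,\tau_z)$ of total degree $N$, in which the $(i,j)$-term contributes in degree $N-(i+j)$; so for $N=2\ell$ the homogeneous degree-$\ell$ part is precisely $\sum_{i+j=\ell}c_{ij}\tau_x^i\tau_z^j$, and interpolating on a $(2k{+}1)\times(2k{+}1)$ grid recovers the $c_{ij}$ directly. (The ``subtraction'' you mention is actually unnecessary: the lower-$\ell'$ terms sit only in strictly higher homogeneous degrees.) All oracle parameters are at most $2k$ and all instance sizes are polynomial in $|G'|+k$, so the fpt Turing reduction is valid. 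In short: same proof, with the black-box final step made self-contained.
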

\begin{proof}
	Combining Theorem \ref{thm:kmatchings} with Lemma \ref{lem:kcoeffhard} and Lemma \ref{lem:fewcoeff} yields that computing the first $k$ coefficients of the univariate forest polynomial of multigraphs is $\ccSW$-hard. 
	Using Lemma 9 from \cite{ETHlpu}, this establishes hardness also for simple graphs.
\end{proof}

\section{Counting bases in matroids}

\begin{definition}
The problem of counting the number of bases of a matroid parameterized by its rank (nullity) is denoted as $\rankbase$ ($\nulbase$).
\end{definition}

\begin{theorem}
\label{thm:bases_hard}
The problems $\rankbase$ and $\nulbase$ are $\ccSW$-hard even when restricted to matroids representable over a field of characteristic $2$.
\end{theorem}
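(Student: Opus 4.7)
The plan is to reduce the $\ccSW$-hard problem of counting $k$-forests (Theorem~\ref{thm:hardness_k_forests}) to $\rankbase$ restricted to matroids of characteristic~$2$, by realising the $k$-forests of a graph $G$ directly as the bases of a matroid of rank exactly $k$. The starting point is the unoriented incidence matrix $M[G] \in \mat(n \times m, \GF_2)$, whose linearly independent column subsets over $\GF_2$ correspond bijectively to the forests of $G$; in particular, the $k$-element linearly independent sets are in bijection with the $k$-forests.

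From $M[G]$ I then invoke the deterministic $k$-truncation of Lokshtanov et al.\ to obtain a matrix $M^k$ of size $k \times m$ with the property that any set of at most $k$ columns of $M^k$ is linearly independent if and only if the corresponding columns of $M[G]$ are. Since that algorithm, applied to a matrix over $\GF_2$, produces a matrix over some extension field $\GF_{2^s}$ (still of characteristic~$2$), the matroid $N$ represented by $M^k$ is representable over a field of characteristic~$2$ and has rank at most $k$. If $\rank(N) < k$ then $G$ has no $k$-forest and the answer is $0$; otherwise $\rank(N) = k$ and the bases of $N$ are exactly the $k$-forests of $G$, so counting bases of $N$ equals counting $k$-forests of $G$. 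This is an fpt parsimonious reduction preserving the parameter, yielding $\ccSW$-hardness of $\rankbase$ in characteristic~$2$.

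For $\nulbase$ I would post-compose with the dual-matroid construction $N \mapsto N^*$, which, as stated in the preliminaries, can be carried out in polynomial time while staying over $\GF_{2^s}$. Complementation $B \mapsto E \setminus B$ is a bijection between the bases of $N$ and those of $N^*$, so the number of bases is unchanged, while the nullity of $N^*$ is $|E|-\rank(N^*)=\rank(N)=k$. The same reduction therefore gives $\ccSW$-hardness of $\nulbase$ in characteristic~$2$.

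The main technical point to verify is that the truncation procedure of Lokshtanov et al.\ genuinely remains in characteristic~$2$ when applied to a $\GF_2$-matrix; once this is acknowledged, the rest of the argument is a direct assembly of the hardness of $k$-forests with two standard matroid tools (truncation and duality). The conceptual content is simply that $k$-truncation collapses the matroid rank to exactly $k$, thereby converting the rank-agnostic ``count independent sets of size $k$'' problem (which is where $k$-forests naturally live) into the ``count bases'' problem at parameter $k$.
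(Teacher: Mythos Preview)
Your proposal is correct and matches the paper's approach almost exactly: the paper likewise reduces from counting $k$-forests via the $\GF_2$-incidence matrix, applies the deterministic $k$-truncation of Lokshtanov et al.\ to land in $\GF_{2^{rk}}$ (after first passing to reduced row echelon form and handling the $r<k$ case), and then dualises for $\nulbase$. The only cosmetic difference is that the paper checks the rank before truncating rather than after, and explicitly reduces to row echelon form first; your flagged technical point about truncation staying in characteristic~$2$ is precisely what the paper relies on.
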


\begin{lemma}
\label{lem:bases_rank_hard}
The problem of counting $k$-forests in a simple graph is fpt Turing-reducible to the problem $\rankbase$ even when the matroid is restricted to be representable over a field of characteristic $2$.
\end{lemma}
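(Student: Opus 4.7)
The plan is to construct a parsimonious reduction that encodes the $k$-forests of a simple graph $G$ as the bases of a matroid of rank $k$ representable over a field of characteristic $2$. The key ingredients are the classical identification of forests in $G$ with independent sets of the binary graphic matroid, together with the deterministic matrix truncation of Lokshtanov et al.

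First, given $(G,k)$, I would form the incidence matrix $M[G]\in\mat(n\times m, \GF_2)$. By the fact recalled in Section~2.3, a set of columns of $M[G]$ is linearly independent over $\GF_2$ iff the corresponding edges form a forest in $G$, so the $k$-forests of $G$ are in bijection with the $k$-element independent sets of the binary matroid $N$ represented by $M[G]$.

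Next, I would apply the deterministic $k$-truncation of Lokshtanov et al.\ \cite{Lokshtanov2015} to $M[G]$, obtaining in polynomial time a matrix $\rho$ of size $k\times m$ whose rank function agrees with that of $M[G]$ on every subset of at most $k$ columns. Since their construction produces a matrix over an extension of the input field, the output of applying it to a $\GF_2$-matrix lives in some $\GF_{2^\ell}$, which again has characteristic $2$. Writing $N'$ for the matroid represented by $\rho$, its independent sets are exactly the independent sets of $N$ of size at most $k$.

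It then remains to read off the bases: if $\rank(N)\ge k$ then $\rank(N')=k$ and the bases of $N'$ coincide with the $k$-element independent sets of $N$, that is, with the $k$-forests of $G$; the degenerate case $\rank(N)<k$ is detectable in polynomial time from the component structure of $G$ and forces the output to be $0$. A single oracle call to $\rankbase$ on $\rho$ then returns the desired count, and the rank of the queried instance is at most $k$, yielding a valid fpt (in fact parsimonious) reduction. The only delicate point is the characteristic-$2$ guarantee, for which one must briefly inspect the Lokshtanov et al.\ construction to confirm that it operates entirely within a field extension of the input field; everything else is routine matroid-theoretic bookkeeping.
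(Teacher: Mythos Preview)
Your proposal is correct and follows essentially the same route as the paper: pass to the incidence matrix over $\GF_2$, apply the deterministic $k$-truncation of Lokshtanov et al.\ to obtain a rank-$k$ matrix over a characteristic-$2$ extension field, and query the $\rankbase$ oracle once. The only cosmetic difference is that the paper first brings $M[G]$ to reduced row echelon form and deletes zero rows before truncating (so the extension degree is governed by $r$ rather than $n$), and checks the degenerate case $r<k$ via the matrix rank rather than via the component structure of $G$; neither point affects the argument.
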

\begin{proof}
Given a graph $G=(V,E)$ with $|V|=n$ and $|E|=m$ and a natural number $k$, we want to count the $k$-forests of $G$. Therefore we first construct the incidence matrix $M[G]\in \mat(n \times m, \GF_2)$ of $G$. Recall that the linearly independent $k$-subsets of columns of $M[G]$ correspond one-to-one to $k$-forests in $G$. In the next step we compute the reduced row echelon form of $M[G]$ by applying elementary row operations. As stated in the beginning, these operations do not change the linear dependency of the column vectors. Then we delete the zero rows which also does not change the linear dependency of the columns. We denote the resulting matrix as $M^{\mathrm{red}}[G]$. Now let $r$ be the rank of $M^{\mathrm{red}}[G]$ which equals the rank of $M[G]$. Note that $M^{\mathrm{red}}[G] \in \mat(r \times m, \GF_2)$. If $r<k$ then we output $0$ as $G$ does not have any $k$-forests in this case. Otherwise, we $k$-truncate $M^{\mathrm{red}}[G]$ in polynomial time by the deterministic algorithm of Lokshtanov et al. \cite{Lokshtanov2015} and end up in the matrix $M^k[G]\in \mat(k \times m,\GF_{2^{rk}})$. Note that the linear dependency of the column vectors is preserved, i.e., whenever columns $c_1,\dots,c_k$ are linearly independent in $M[G]$ they are also linearly independent in $M^k[G]$ and vice versa. Therefore the rank of $M^k[G]$ is at least $k$ since $M[G]$ has rank greater or equal $k$. As $M^k[G]$ has only $k$ rows it follows that the rank is \emph{exactly} $k$, i.e., $M^k[G]$ has full rank. Furthermore the number of linearly independent $k$-subsets of columns of $M^k[G]$ equals the number of $k$-forests in $G$. As the rank of $M^k[G]$ is full we conclude that the number of bases of the matroid that is represented by $M^k[G]$ equals the number of $k$-forests in $G$. Furthermore this matroid is by representable over $\GF_{2^{rk}}$ --- a field of characteristic $2$ --- by construction.
\end{proof}

\begin{lemma}
\label{lem:bases_nullity_hard}
The problem of counting $k$-forests in a simple graph is fpt Turing-reducible to the problem $\nulbase$ even when the matroid is restricted to be representable over a field of characteristic $2$.
\end{lemma}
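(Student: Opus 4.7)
The plan is to derive this from Lemma~\ref{lem:bases_rank_hard} by matroid duality. Recall from the preliminaries that for every matroid $M$ given by a representation over a field $F$, a representation of the dual matroid $M^*$ over the \emph{same} field $F$ can be computed in polynomial time, and that $B \subseteq E$ is a basis of $M^*$ if and only if $E \setminus B$ is a basis of $M$. In particular, $|\base(M^*)| = |\base(M)|$, and $\rank(M^*) = |E| - \rank(M)$, so the nullity of $M^*$ equals the rank of $M$.

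So, given a graph $G=(V,E)$ and a parameter $k$, I would first invoke the construction of Lemma~\ref{lem:bases_rank_hard} to obtain in polynomial time a matrix $M^k[G] \in \mat(k \times m, \GF_{2^{rk}})$ whose associated matroid $M$ has rank exactly $k$ and whose number of bases equals the number of $k$-forests of $G$. Then I would compute a representation of the dual matroid $M^*$ over the same field $\GF_{2^{rk}}$ (which has characteristic $2$) in polynomial time. By the remarks above, $M^*$ has nullity exactly $k$, is representable over a field of characteristic $2$, and has exactly as many bases as $M$. A single oracle call to $\nulbase$ on (a representation of) $M^*$ therefore returns the number of $k$-forests of $G$.

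Since the parameter of the produced $\nulbase$-instance is $k$, this is a valid fpt Turing reduction from counting $k$-forests to $\nulbase$ restricted to matroids representable over a field of characteristic $2$. There is no real obstacle here beyond invoking duality correctly: the ``work'' has already been done in Lemma~\ref{lem:bases_rank_hard} and in the standard polynomial-time construction of a dual representation; the only point worth checking is that the dual can indeed be represented over the \emph{same} field, which is exactly what was recorded in the preliminaries.
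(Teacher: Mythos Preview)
Your proposal is correct and follows essentially the same route as the paper: invoke the construction of Lemma~\ref{lem:bases_rank_hard} to obtain $M^k[G]$, pass to the dual matroid (represented over the same field of characteristic~$2$), and use that dualization preserves the number of bases while making the nullity equal to the original rank~$k$. The paper's proof is terser but identical in substance.
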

\begin{proof}
We proceed as in the proof of Lemma~\ref{lem:bases_rank_hard}. Having $M^k[G]$ we construct its dual matroid $M^*[G]$ which can be done in polynomial time (see e.g. \cite{marx2009parameterized}). It holds that the number of bases of $M^*[G]$ equals the number of bases of $M^k[G]$. Furthermore the rank of $M^*[G]$ is $n-k$, i.e., its nullity is $k$, which concludes the proof.
\end{proof}

\begin{proof}[of Theorem~\ref{thm:bases_hard}]
Follows from Lemma~\ref{lem:bases_rank_hard}, Lemma~\ref{lem:bases_nullity_hard} and Theorem~\ref{thm:hardness_k_forests}.
\end{proof}

One might ask whether the same is true for matroids that are representable over finite fields. Due to Vertigan \cite{vertigan1998bicycle}, it is known that the classical problem of counting bases in binary matroids is $\ccSP$-hard. However, this does most likely not hold for the parameterized versions:

\begin{theorem}
\label{thm:bases_easy}
For every fixed finite field $\GF$, the problems $\rankbase$ and $\nulbase$ are fixed parameter tractable for matroids given in a linear representation over $\GF$.
\end{theorem}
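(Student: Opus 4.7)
The plan is to give a direct FPT algorithm for $\rankbase$ parameterized by the rank $k$, and then handle $\nulbase$ by reducing to $\rankbase$ via matroid duality.

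For $\rankbase$, I would first apply Gaussian elimination to the given representation $\rho$ of $M$ so that it has exactly $k = \rank(M)$ rows; as noted in the preliminaries this preserves all linear-independence relations among the columns. The columns are then vectors in $\GF^k$, an ambient space of size only $|\GF|^k$. Since $\GF$ is fixed, this quantity is bounded in terms of $k$ alone, and this is the decisive leverage: one can afford to enumerate subsets of the ambient vector space.

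Concretely, for each $v \in \GF^k$ let $n_v$ denote the number of columns of $\rho$ equal to $v$; all of these can be computed in time $O(mk)$ by hashing the columns. Since equal columns are automatically linearly dependent, a basis of $M$ arises precisely by choosing $k$ pairwise distinct, linearly independent vectors $v_1,\ldots,v_k \in \GF^k$ and then selecting one concrete column of value $v_i$ for each $i$. This yields
\[
\#\base(M) \;=\; \sum_{B \in \mathcal{B}} \;\prod_{v \in B} n_v,
\]
where $\mathcal{B}$ ranges over the unordered bases of the vector space $\GF^k$. The number of summands satisfies $|\mathcal{B}| \leq |\GF|^{k^2}$, each summand is a $k$-fold product of integers bounded by $m$, and $\mathcal{B}$ itself can be enumerated in time $|\GF|^{O(k^2)}$ by iterating over all $k$-subsets of $\GF^k$ and testing linear independence. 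Altogether this gives a running time of $|\GF|^{O(k^2)} \cdot \poly(m)$, i.e.\ FPT in $k$.

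For $\nulbase$, the parameter is the nullity $k = |E| - \rank(M)$. Given a representation of $M$ over $\GF$, one constructs in polynomial time a representation of the dual matroid $M^*$ over the same field (preliminaries); then $\rank(M^*) = k$ and $\#\base(M^*) = \#\base(M)$, so invoking the rank-parameterized algorithm on $M^*$ solves $\nulbase$ within the same FPT bound. The main obstacle here is more conceptual than technical: one has to notice that for a matroid of rank $k$ linearly represented over a fixed finite field, the entire ``column universe'' has size $|\GF|^k$, a function of $k$ alone, so that naive enumeration over candidate bases of $\GF^k$ already suffices; once this is observed, the remaining steps, as well as the duality-based reduction for $\nulbase$, are routine.
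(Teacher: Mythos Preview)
Your proof is correct and follows essentially the same approach as the paper: reduce the representation to $k$ rows, observe that columns then live in $\GF^k$ of size $|\GF|^k$, record multiplicities of each distinct column vector, enumerate $k$-subsets of these distinct vectors testing for independence and summing the products of multiplicities, and handle $\nulbase$ via matroid duality. The resulting running-time bounds also match (the paper states $\binom{|\GF|^k}{k}\cdot\poly(n)$, which is $|\GF|^{O(k^2)}\cdot\poly(n)$).
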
 

\begin{proof}
We give an fpt algorithm for $\rankbase$. An algorithm for $\nulbase$ follows by computing the dual matroid before.\\
Let $s$ be the size of the finite field, $M$ be the representation of the given matroid and let $k$ be its rank. We can assume that $M$ only has $k$ rows. (Otherwise we can compute the reduced row echelon form and delete zero rows which does not change the linear dependencies of the column vectors). If $M$ has only $k$ rows then there are at most $s^k$ different column vectors. Therefore we remember the muliplicity of every column vector and delete multiple occurences afterwards. We end up in a matrix with at most $s^k$ columns. Then we can check for every $k$-subset of columns whether they are linearly independent. If this is the case, we just multiply the multiplicities of the columns and in the end we output the sum of all those terms. The running time of this procedure is bounded by
\[ \binom{s^k}{k} \cdot \mathrm{poly}(n)\]
where $n$ is the number of columns of the matrix.
\end{proof}

\section*{Acknowledgements}
The authors wish to thank Holger Dell, Radu Curticapean and Markus Bläser for helpful comments on this work.

\bibliographystyle{plain}
\bibliography{matrixmod.bib}

\end{document}